\begin{document}
%
% paper title
% can use linebreaks \\ within to get better formatting as desired
\title{Range Assignment for Power Optimization in Load-Coupled Heterogeneous Networks}

% author names and affiliations
% use a multiple column layout for up to three different
% affiliations
%\author{\IEEEauthorblockN{Lei You}
%\IEEEauthorblockA{Information Engineering College\\
%Qingdao University, China\\
%%Qingdao 266071, Shandong Province, China\\
%Email: youleiqdu@gmail.com}
%\and
%\IEEEauthorblockN{Lei Lei}
%\IEEEauthorblockA{Department of Science and Technology\\
%Link\"{o}ping University, Sweden\\
%Email: lei.lei@liu.se}
%\and
%\IEEEauthorblockN{Di Yuan}
%\IEEEauthorblockA{Department of Science and Technology\\
%Link\"{o}ping University, Sweden\\
%%Telephone: (800) 555--1212\\
%Email: di.yuan@liu.se}}

% another author names and affiliations setting
\author{\IEEEauthorblockN{Lei You$^1$, Lei Lei$^2$, and Di Yuan$^2$}
\IEEEauthorblockA{\small$^1$Information Engineering College,
Qingdao University, China\\}
\IEEEauthorblockA{\small$^2$Department of Science and Technology,
Link\"{o}ping University, Sweden\\}
\small\texttt{youleiqdu@gmail.com, \{lei.lei, di.yuan\}@liu.se}
}

% conference papers do not typically use \thanks and this command
% is locked out in conference mode. If really needed, such as for
% the acknowledgment of grants, issue a \IEEEoverridecommandlockouts
% after \documentclass

% for over three affiliations, or if they all won't fit within the width
% of the page, use this alternative format:
% 
%\author{\IEEEauthorblockN{Michael Shell\IEEEauthorrefmark{1},
%Homer Simpson\IEEEauthorrefmark{2},
%James Kirk\IEEEauthorrefmark{3}, 
%Montgomery Scott\IEEEauthorrefmark{3} and
%Eldon Tyrell\IEEEauthorrefmark{4}}
%\IEEEauthorblockA{\IEEEauthorrefmark{1}School of Electrical and Computer Engineering\\
%Georgia Institute of Technology,
%Atlanta, Georgia 30332--0250\\ Email: see http://www.michaelshell.org/contact.html}
%\IEEEauthorblockA{\IEEEauthorrefmark{2}Twentieth Century Fox, Springfield, USA\\
%Email: homer@thesimpsons.com}
%\IEEEauthorblockA{\IEEEauthorrefmark{3}Starfleet Academy, San Francisco, California 96678-2391\\
%Telephone: (800) 555--1212, Fax: (888) 555--1212}
%\IEEEauthorblockA{\IEEEauthorrefmark{4}Tyrell Inc., 123 Replicant Street, Los Angeles, California 90210--4321}}

% use for special paper notices
%\IEEEspecialpapernotice{(Invited Paper)}

% make the title area
\maketitle

\begin{abstract}
%\boldmath
We consider the problem of transmission energy optimization via range assignment for Low Power Nodes (LPNs) in Long Term Evolution (LTE) Heterogenous Networks (HetNets). The optimization is subject to the load coupling model, where the cells interfere with one another. Each cell provides data service for its users so as to maintain a target Quality-of-Service (QoS). 
%We prove that given the maximum power constraint, operating at full load is optimal to minimize sum energy.
We prove that, irrespective the presence of maximum power limit or its value, operating at full load is optimal. 
%Based on tabu search, we propose our algorithm by changing the association between User Equipments (UEs) and cells via selecting cell-specific offsets on LPNs. For each offset, we compute the network-wide power such that all cells operate at full load. 
%The energy minimization can be achieved by changing the association between User Equipments (UEs) and cells via selecting cell-specific offsets on LPNs.
We perform energy minimization by optimizing the association between User Equipments (UEs) and cells via selecting cell-specific offsets on LPNs. 
Moreover, the optimization problem is proved to be $\mathcal{NP}$-hard. 
We propose a tabu search algorithm for offset optimization (TSO). For each offset, TSO computes the optimal power solution such that all cells operate at full load. 
Numerical results demonstrate the significant performance improvement of TSO on optimizing the sum transmission energy, compared to the conventional solution where uniform offset is used for all LPNs.
\end{abstract}
% IEEEtran.cls defaults to using nonbold math in the Abstract.
% This preserves the distinction between vectors and scalars. However,
% if the conference you are submitting to favors bold math in the abstract,
% then you can use LaTeX's standard command \boldmath at the very start
% of the abstract to achieve this. Many IEEE journals/conferences frown on
% math in the abstract anyway.

% no keywords

% For peer review papers, you can put extra information on the cover
% page as needed:
% \ifCLASSOPTIONpeerreview
% \begin{center} \bfseries EDICS Category: 3-BBND \end{center}
% \fi
%
% For peerreview papers, this IEEEtran command inserts a page break and
% creates the second title. It will be ignored for other modes.
\IEEEpeerreviewmaketitle

\theoremstyle{plain}
\newtheorem{definition}{Definition}
\newtheorem{theorem}{Theorem}
\newtheorem{lemma}{Lemma}
\newtheorem{proposition}{Proposition}
\newtheorem{postulation}{Postulation}

\section{Introduction}
%Mobile broadband traffic has surpassed voice and is continuing to grow rapidly, with global traffic figures expected to double annually over the next several years \cite{Ericsson:data}. 
Recently, Heterogeneous Networks (HetNets) are viewed as an attractive approach for expanding mobile network capacity \cite{Zhang:2013hv}, balancing the network load as well as alleviating the traffic burden.
A HetNet is a mix of both overlaying Macro Cells (MCs) and underlying small Low-Power Nodes (LPNs). MCs provide wide area data services. LPNs are deployed within traffic hotspots, offloading part of the traffic volume from MCs \cite{Hiltunen:2011}. LPNs are able to handle the growing geographical diversity of traffic and the increasing variation in local densities of user distribution. With the deployment of LPNs, HetNets can be thereby viewed as a way of meeting traffic demands and performance expectations, particularly in situations where traffic is concentrated in hotspots, or areas that cannot be suitably covered by MCs.

In a Long Term Evolution (LTE) HetNet is that, cells using the same frequency band interfere with one another such that a load coupling model should be considered\cite{Ericsson:2010LTE}.
%In LTE HetNet system, frequency reuse is employed, and thus cells using the same frequency band interfere with one another. 
We refer to the average level of usage of the time-frequency resource units as the load.
One challenge is that a base station consumes a significant fraction of the total end-to-end energy \cite{Ericsson:2010LTE}.
The cell load levels affect the sum transmission energy for serving User Equipments (UEs), and is thereby a key aspect of resource optimization in LTE HetNets. 
Within this context, cell selection is a crucial aspect for system performance.
%In a deployment of a HetNet, the cell selection is traditionally based on the quality of received signal by the UEs. Due to the higher output power of neighboring cells, in particular MCs, the coverage area of LPNs may be too small to serve a UE. 
Each UE follows the rule of selecting the cell with the best received signal power plus an offset, which is a common cell parameter.
A served UE may add the offset to its received power from a cell. Then, the signal strength of the cell can be `virtually' amplified to be the best for the UE. 
Thus the association between cells and UEs are influenced by the cell coverage control mechanism. 
%Note that a moderate offset value is essential so as to avoid the traffic being totally shifted to the LPNs thus making LPN cells overloaded.
\vskip -1pt
In the previous work \cite{Ho:2014icc,Ho:2014sub}, the power adjustment algorithm is given for all base stations to achieve full load, thus optimizing the sum transmission energy. 
With the same load coupling model, \cite{Ho:2014towc} provides a utility metric framework, where the utility function is maximized by optimizing the demand to be served. 
In \cite{Siomina:2012icc}, an optimization framework for load balancing in LTE HetNets is proposed, by means of cell range assignment using cell-specific offset.
In this paper, we focus on the problem of minimizing the sum power
used for transmission in LTE HetNets, while maintaining a target Quality-of-Service (QoS) for each UE.
The approach is to implement cell range assignment as in \cite{Siomina:2012icc}, thus changing the association between UEs and cells to achieve the minimal total energy consumption. Our contributions are as follows.\vskip -10pt
\begin{enumerate}
\item We propose a framework modeling LPN range optimization for HetNet energy minimization.
%, where the objective is to minimize the sum transmission energy via cell offset adjustment.
\item We prove that, independent of the maximum power limit, operating at full load leads to minimum energy consumption.
%So there is no need to consider less load than full load on sum power optimization. 
\item We consider the problem complexity, and prove its $\mathcal{NP}$ hardness.
\item We propose an algorithm for offset optimization based on tabu search, for which we numerically demonstrate its effectiveness and performance improvement.
%The numerical results show that our algorithm for offset optimization outperforms the conventional uniform offset settings on energy performance.
\end{enumerate}
\vskip -2pt

The paper is organized as follows. In Section~\ref{sec:sys_mod} we introduce the HetNet and load coupling model. Under the model, we formally define the optimization problem of minimum sum energy in Section \ref{sec:optimization}. In Section~\ref{sec:full_load}, we prove the optimality of the full load independent of the maximum power limit. We then prove the problem is $\mathcal{NP}$-hard. Section~\ref{sec:algorithm} proposes the tabu-search algorithm for energy optimization via range assignment. Section~\ref{sec:numerical} shows the numerical results. Finally, the paper is concluded in Section~\ref{sec:conclusion}.
\vskip -6pt
%\textit{Notations:} We denote a (tall) vector by a bold lower case letter, say $\bm{a}$. We denote $\bm{a}>\bm{0}$ and $\bm{a}>\bm{1}$ if $a_i>0$ and $a_i>1$, respectively, for all $i$; similarly for the inequality $<$ and the equality $=$.

\section{System Model}
\label{sec:sys_mod}

\subsection{HetNet System}
The sets of MCs and LPN cells are denoted by $\mathcal{I}_1$ and $\mathcal{I}_2$, respectively. Let $\mathcal{I}=\mathcal{I}_1\cup\mathcal{I}_2$, $n=|\mathcal{I}|$ and $m=|\mathcal{I}_2|$. The set of UEs is denoted by $\mathcal{J}$. 
%In the context with no ambiguousness, we represent the subset $\mathcal{J}_i(\bm{x})$ as $\mathcal{J}_i$ for simplification. 
Cell $i$ transmits with power $0< p_i\leq l_i$ per resource unit (in time and frequency, e.g., resource block in LTE), where $l_i$ is the power limit for each resource unit of cell $i$.  Notation $\nu_i$ represents the load of cell $i$ for data transmission, which is viewed as the portion of resource consumption of the cell. The network-wide load is given by the vector, $\bm{\nu}=[\nu_1,\nu_2,\ldots,\nu_n]^{\mathsf{T}}$.

The cells in $\mathcal{I}_1$ use zero offset (set $\mathcal{I}_1$ may also be empty), and cell-specific offset optimization only applies to the LPN cells in set $\mathcal{I}_2$. The candidate offset values compose set $\mathcal{S}$. We use $x_{is}\in\{0,1\}$ to denote whether or not cell $i\in\mathcal{I}_2$ uses offset level $s\in\mathcal{S}$, in dB. 
Each cell uses one of the candidates of the offset levels and thereby we have the constraint $\sum_{s\in\mathcal{S}}x_{is}=1$. 
We use $\bm{x}$ to denote the network-wide range assignment. That is, the association between UEs and cells, is represented as $\bm{u}(\bm{x})$. If cell $i\in\mathcal{I}$ is the serving cell of UE $j\in\mathcal{J}$, then $u_{ij}(\bm{x})=1$.
The resulting network-wide range assignment of $\bm{x}$, that is, the allocation of UEs to cells, is denoted by $\bm{u}(\bm{x})$, with $u_{ij}(\bm{x})=1$, if cell $i\in\mathcal{I}$ is the serving cell of UE $j\in\mathcal{J}$. 
%For each UE, the cell association follows the rule of selecting the cell (MC or LPN cell) with the best Signal-to-Interference-and-Noise-Ratio (SINR). 
The UE-cell association is determined by the control channel. 
The pilot power used to compute the association is fixed. 
For LPNs, the UE-LPN association can be adjusted by adding an offset $s\in\mathcal{S}$ to the pilot power.
Each UE can only be served by one cell at a time, so we have the constraint $\sum_{i\in\mathcal{I}}u_{ij}(\bm{x})=1$. For each $i\in \mathcal{I}$, the subset of UEs served by cell $i$ is related to the association $\bm{u}$, which is determined by the offset $\bm{x}$. So we denote the set of UEs served by cell $i$ as a function of offset $\bm{x}$, i.e., $\mathcal{J}_i(\bm{x})\triangleq\{j:u_{ij}(\bm{x})=1\}$.

\subsection{Load Coupling}

Next we consider the load coupling model for the HetNet. The load $\nu_{i}$, measures the fractional usage of resources in cell $i$. In LTE systems, the load can be viewed as the expected fraction of the time-frequency resources that are scheduled to deliver data. Suppose $u_{ij}=1$, then we model the SINR of user $j$ in cell $i$ as \cite{Siomina:2012icc,Siomina:2009ieee,Majewski:2010aict,Siomina:2012twc,Fehske:2012icc}. 
\vskip -3pt
\begin{equation}
\mathsf{SINR}_{ij}(\bm{\nu})=\frac{p_ig_{ij}}{\sum_{k\in\mathcal{I}\backslash\{i\}}p_kg_{kj}\nu_k+\sigma^2}
\label{eq:sinr}
\end{equation}
\vskip -3pt
In (\ref{eq:sinr}) $g_{kj},(k\neq i)$, represents the channel gain from the interfering cells. The noise power is denoted by $\sigma^2$. We denote the channel power gain from cell $i$ to user $j$ by $g_{ij}$. 
Load $\nu_k$ has the role of interference scaling, as it is intuitively interpreted as the probability that the served UEs of cell $i$ receives interference from cell $k$ on all resource units.
%We view the load $\nu_{k}$ as the probability that the served UEs of cell $i$ receives interference from cell $k$ on all resource units. 
Thus, the expected interference with expectation taken over time and frequency for all transmissions, is denoted by $p_{k}g_{kj}\nu_{k}$.

We apply the cell load coupling function developed in the previous works \cite{Ho:2014towc,Ho:2014icc}. The achievable rate is given by $\tilde{r}_{ij} = B\log(1 + \mathsf{SINR}_{ij} )$ nat/s, where $B$ is the bandwidth of the resource unit and $\log$ is the natural logarithm. The demand of user $j$ on cell $i$ is $r_{ij}$ such that cell $i$ has to occupy $\nu_{ij}\triangleq r_{ij}/\tilde{r}_{ij}$ resource units. Assume that there are $M$ resource units available in total, then we get the load for the cell as $\nu_i=\sum_{j\in\mathcal{J}_i(\bm{x})}\nu_{ij}/M$ as follows.
\vskip -3pt
\begin{equation}
\nu_i=\frac{1}{MB}\sum_{j\in\mathcal{J}_i(\bm{x})}\frac{r_{ij}}{\log(1+\mathsf{SINR}_{ij}(\bm{\nu}))}\triangleq f_i(\bm{\nu}), 
~i\in\mathcal{I}
\label{eq:load}
\end{equation}
\vskip -3pt
We let $MB = 1$ without loss of generality. Then in (\ref{eq:load}), $r_{ij}$ is normalized by $MB$, the amount of effective time-frequency resources. In (\ref{eq:load}), $\mathcal{J}_i(\bm{x})$ is determined by the association between cells and UEs, i.e., $\bm{u}(\bm{x})$. We let $\bm{f^{\bm{u}(\bm{x})}}(\bm{\nu}) = [f_1(\bm{\nu}), \ldots , f_n(\bm{\nu})]^{\mathsf{T}}$. In vector form, we obtain the \emph{non-linear load coupling equation} ($\mathsf{NLCE}$) as in \cite{Ho:2014towc,Ho:2014icc}.
\vskip -3pt
\begin{equation}
\mathsf{NLCE:}~~\bm{\nu}=\bm{f^{\bm{u}(\bm{x})}}(\bm{\nu};\bm{r},\bm{p}),~\bm{0}<\bm{\nu}\leq\bm{1}
\label{eq:NLCE}
\end{equation}
\vskip -3pt
We remark that the function $\bm{f}$ is related to the cell-UE association $\bm{u}(\bm{x})$, which is induced by the range assignment $\bm{x}$. Also, there is dependence of the load $\bm{\nu}$ on the demand $\bm{r}$ and power $\bm{p}$. The equation cannot be readily solved in closed-form, for that the load $\bm{\nu}$ appears in both sides.
However, for a target load, it is shown in \cite{Ho:2014icc} that the power converges into a fixed-point solution in (\ref{eq:NLCE}) through an iterative process. 
%We denote the fixed-point solution of $\bm{\nu}$ in (\ref{eq:NLCE}), as $\bm{p}(\bm{\nu})$. 
Given load vector $\bm{\nu}$, let $\bm{p}(\bm{\nu})=[p_1(\bm{\nu}),p_2(\bm{\nu}),\ldots,p_n(\bm{\nu})]$ be the power solution in (\ref{eq:NLCE}), where for any $i\in\mathcal{I}$, $p_i(\bm{\nu})$ is the power of cell $i$ at the convergence. 

Note that there is no need to consider users with zero demand.
Therefore the QoS constraints are defined as $\bm{r}\geq \bm{d}_{min}$, where $\bm{d}_{min}$ is strictly positive.  Any load $\nu_i$ for cell $i$ must be positive if we combine the above assumption with $|\mathcal{J}_i(\bm{x})|\geq 1$. So we have $\bm{0}<\bm{\nu}\leq\bm{1}$.

\section{Energy Minimization Problem and Its Hardness}
\label{sec:optimization}

Under the load coupled HetNet model in Section \ref{sec:sys_mod}, the energy minimization problem is given by Problem $P0$. 
\vskip 10pt
$x_{is}$ = Range assignment for cell $i\in\mathcal{I}_2$ on offset level $s\in\mathcal{S}$
\begin{subequations}
\begin{alignat}{2}
P0:~~~ \min &\quad \bm{\nu}^{T}\bm{p} \\
 \textnormal{s.t.} &\quad 0<p_i\leq l_i,~~~~~~~~~\forall i\in\mathcal{I} \\
 &\quad \bm{r}\geq\bm{d}_{min} \\
 &\quad \bm{\nu}=\bm{f^{\bm{u}(\bm{x})}}(\bm{\nu};\bm{r},\bm{p}), \bm{0}<\bm{\nu}\leq\bm{1}\\
 &\quad \sum_{i\in \mathcal{I}}u_{ij}(\bm{x})=1,~~~~~\forall j\in\mathcal{J} \\
 &\quad u_{ij}\in\{0,1\},~~~~~~~~~\forall i\in\mathcal{I},j\in\mathcal{J}\\ 
 &\quad \sum_{s\in \mathcal{S}}x_{is}=1,~~~~~~~~~\forall i\in\mathcal{I}_2 \\
 &\quad x_{is}\in\{0,1\},~~~~~~~~~\forall i\in\mathcal{I}_2,s\in\mathcal{S}
\end{alignat}
\label{eq:p0}
\end{subequations}
\vskip -10pt
The objective is to minimize the sum transmission energy given by $\sum_{i=1}^{n}\nu_ip_i$.
%, by optimizing the offset of LPN cells so as to re-adjust the allocation between cells and UEs. 
We note that the product $\nu_ip_i$ measures the transmission energy used by cell $i$, because the load $\nu_i$ reflects the normalized amount of resource units used in time or frequency while $p_i$ is the power per resource unit. As mentioned earlier, the power and demand variables are strictly positive. The maximum power constraint is shown in (\ref{eq:p0}b). Constraint (\ref{eq:p0}c) is imposed so that the demand $\bm{r}$ satisfies the QoS constraint. Constraint (\ref{eq:p0}d) is the load coupling constraint. The association constraint between cells and UEs are shown in (\ref{eq:p0}e) and (\ref{eq:p0}f). Constraint (\ref{eq:p0}g) and (\ref{eq:p0}h) are imposed to satisfy that one cell can only be of one offset level.

%Note that Problem $P0$ is a general model for the energy minimization. In Problem $P0$, when the offset $\bm{x}$ for LPN cells is fixed, the objective value may still be distinctive with different load vector $\bm{\nu}$. The user demand $\bm{r}$ also affects the objective. 
%However, we show the optimality of the full load case in Section \ref{sec:full_load}, i.e., the optimality of the full load case for Problem $P0$. So there is no need to consider less load than full load. In addition, we show that only the minimal demand needs to be considered. Hence, we only need to focus on finding the suitable cell range offset $\bm{x}$ in Problem $P0$.

\section{Theoretical Properties}
\label{sec:full_load}

The work in \cite{Ho:2014icc,Ho:2014sub} proves that full load is optimal for energy minimization, where there is no limit for power on each resource unit. Extending the result in \cite{Ho:2014icc,Ho:2014sub}, we show the optimality of the full load under the maximum power constraint. Further, we show that $P0$ is $\mathcal{NP}$-hard. 

\begin{lemma}

For any two load $\bm{\nu}$ and $\bm{\nu'}$ with $\bm{\nu'}\leq\bm{\nu}$, $\bm{p}(\bm{\nu})\leq \bm{p}(\bm{\nu'})$.
\label{lem:opt_full2} 
\end{lemma}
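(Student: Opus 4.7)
The plan is to work not with $\bm{p}$ directly but with the effective-power vector $\bm{v}$ defined by $v_i=p_i\nu_i$, since the interference seen by any UE in cell $i$ equals $\sum_{k\ne i}p_kg_{kj}\nu_k=\sum_{k\ne i}v_kg_{kj}$, which is linear in $\bm{v}$. Rewriting the per-cell NLCE~(\ref{eq:NLCE}) in these variables decouples the role of the load vector: given $\bm{\nu}$ and the components $\bm{v}_{-i}$ of the other cells, the cell-$i$ equation depends on $\bm{\nu}$ only through the scalar $\nu_i$. I would define the operator $T^{\bm{\nu}}$ by $[T^{\bm{\nu}}(\bm{v})]_i=\nu_i p_i'$, where $p_i'$ is the unique positive solution of
\begin{equation*}
\sum_{j\in\mathcal{J}_i(\bm{x})}\frac{r_{ij}}{\log\left(1+\frac{p_i'\,g_{ij}}{\sum_{k\ne i}v_kg_{kj}+\sigma^2}\right)}=\nu_i,
\end{equation*}
and invoke the fixed-point iteration of~\cite{Ho:2014icc} (rewritten in $\bm{v}$ coordinates) so that $\bm{v}^{(t+1)}=T^{\bm{\nu}}(\bm{v}^{(t)})$ started from $\bm{v}^{(0)}=\bm{0}$ converges to $\bm{v}(\bm{\nu})\triangleq[p_1(\bm{\nu})\nu_1,\ldots,p_n(\bm{\nu})\nu_n]^{\mathsf{T}}$.

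Next I would establish two monotonicity properties of $T^{\bm{\nu}}$. First, $T^{\bm{\nu}}$ is non-decreasing in $\bm{v}$: raising any $v_k$ with $k\ne i$ enlarges the denominator inside the log and thus forces a larger $p_i'$---and hence a larger $v_i'=\nu_i p_i'$---in order to keep the left-hand side equal to $\nu_i$. Second, and this is the technical crux, $[T^{\bm{\nu}}(\bm{v})]_i$ is \emph{non-increasing} in $\nu_i$ when $\bm{v}_{-i}$ is fixed. Implicit differentiation gives $dp_i'/d\nu_i<0$, but I still need the product $\nu_i p_i'$ to inherit the direction. This reduces to showing that $y\mapsto y/\log(1+y\beta)$ is non-decreasing for every fixed $\beta>0$, which is a short exercise since its derivative has the same sign as $\log(1+y\beta)-y\beta/(1+y\beta)\ge 0$. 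Multiplying the defining equation for $p_i'$ through by $y=p_i'$ expresses $v_i'=\sum_j r_{ij}\cdot y/\log(1+y\beta_{ij})$ as a non-decreasing function of $y$, and composing with $y$ non-increasing in $\nu_i$ yields the claim.

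The lemma then follows by a coupled induction. Let $\bm{v}^{(t)}$ and $\bm{w}^{(t)}$ be the sequences produced by $T^{\bm{\nu}}$ and $T^{\bm{\nu}'}$ respectively, both initialized at $\bm{0}$. Assuming $\bm{v}^{(t)}\le\bm{w}^{(t)}$, the first property yields $T^{\bm{\nu}}(\bm{v}^{(t)})\le T^{\bm{\nu}}(\bm{w}^{(t)})$, and the second, applied componentwise using $\nu_i\ge\nu_i'$, yields $T^{\bm{\nu}}(\bm{w}^{(t)})\le T^{\bm{\nu}'}(\bm{w}^{(t)})$; chaining gives $\bm{v}^{(t+1)}\le\bm{w}^{(t+1)}$. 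Passing to the limit, $\bm{v}(\bm{\nu})\le\bm{v}(\bm{\nu}')$, and dividing componentwise by the positive loads, $p_i(\bm{\nu})=v_i(\bm{\nu})/\nu_i\le v_i(\bm{\nu}')/\nu_i\le v_i(\bm{\nu}')/\nu_i'=p_i(\bm{\nu}')$, as required.

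The main obstacle is the second monotonicity property. In the original $\bm{p}$ variables, raising $\nu_i$ has two competing effects---cell $i$ can afford a lower per-unit power, while the interference that cell $i$ injects elsewhere scales with $\nu_i$---and the fixed-point map fails to be monotone in $\bm{\nu}$ under the $\bm{p}$ parametrization. Switching to $\bm{v}$ absorbs the ambiguous interference-scaling effect into the iteration variable itself, leaving $\nu_i$ only on the right-hand side and reducing the question to the single-cell scalar inequality sketched above.
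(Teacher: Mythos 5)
Your proposal is correct, and it rests on the same underlying monotonicity facts as the paper's proof, but it packages them quite differently and more rigorously. The paper's argument is dynamic and informal: it cites Lemma~7 of \cite{Ho:2014icc} for the single-cell fact that lowering $\nu_i$ raises both $p_i$ and the product $p_i\nu_i$, then verbally traces the resulting interference cascade through the power-adjustment algorithm (``the same process repeats for all the cells'') and appeals to Lemma~1 of \cite{Ho:2014sub} for convergence. You instead (i) prove the scalar fact yourself via the monotonicity of $y/\log(1+y\beta)$, which is exactly the content of the cited Lemma~7, and (ii) replace the informal cascade with a clean coupled induction on the fixed-point iterates of the load-coupling map rewritten in the effective-power coordinates $v_i=\nu_i p_i$, in which the interference is linear and the two competing effects of changing $\nu_i$ are disentangled. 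What the paper's route buys is brevity, by leaning on the prior lemmas and the algorithmic picture; what yours buys is a self-contained and airtight argument in which the propagation step is an explicit monotone comparison of iterates rather than a narrative, and the final division by the positive loads cleanly recovers $p_i(\bm{\nu})\le p_i(\bm{\nu'})$. The only point you share with the paper rather than proving is the convergence of the iteration (from $\bm{0}$) to the unique solution of (\ref{eq:NLCE}), which both arguments take from \cite{Ho:2014icc,Ho:2014sub}; note also that the paper asserts the strict inequality $\bm{p}(\bm{\nu})<\bm{p}(\bm{\nu'})$ at convergence, while your induction yields the non-strict version, which is all the lemma statement requires.
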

\begin{proof}
The case of $\bm{\nu}=\bm{\nu'}$ is trivial,  so we only consider the case of $\bm{\nu}>\bm{\nu'}$.
Suppose there exists at least one cell $i$ with $\nu'_i<\nu_i$.
Let $\bm{p}^0=\bm{p}(\bm{\nu})=[p^0_1, p^0_2, \dots, p^0_n]$.
%Let $\bm{\nu'}$ be the target load and we reduce $\nu_i$ to $\nu'_i$. 
To compute $\bm{p}(\bm{\nu'})$, we obtain the power sequence, e.g.,  $p_i^0, p_i^1,p_i^2, p_i^3, \dots$ for cell $i$, by the power adjustment algorithm proposed in \cite{Ho:2014icc}.
Suppose we now reduce cell $i$'s load from $\nu_i$ to $\nu'_i$.
Then we obtain a new load-power pair ($\nu'_i$, $p^1_i$) from  ($\nu_i$, $p^0_i$) for cell $i$.
%The resulting power of cell $i$ is $p_i(\bm{\nu}|\nu_i\leftarrow \nu'_i)$.
According to Lemma 7 in \cite{Ho:2014icc}, $p^1_i>p^0_i$ and $p_i^1\nu'_i>p^0_i\nu_i$. 
This will cause the interference originated from cell $i$ to any other cell $k~(k\neq i)$ to be greater. As a result, the $\mathsf{SINR}$ for cell $k$ to all its served UEs will decrease. 
Then the corresponding load of cell $k$ in Equation (\ref{eq:load}) will thereby increase,  which is larger than the target load $\nu'_k$. 
To reach the target load  $\nu'_k$, the power adjustment algorithm  has to increase cell $k$'s power $p^0_k$ to a larger value $p_k^1$. 
At this stage, cell $k$'s interference to any other cell $h~(h\neq k, i)$ thereby increases. Thus the power $p^0_h$ in cell $h$ will increase as well. The same process repeats for all the cells in the algorithm iterations. The network-wide power and load will converge to a fix point, as proved in Lemma 1 in \cite{Ho:2014sub}. At convergence, $\bm{p}(\bm{\nu})<\bm{p}(\bm{\nu'})$ and the lemma follows.

\end{proof}

%In other words, if power is beyond the limit for full load, then it will be even higher if the load is smaller than full load. So the optimal solution for power optimization exists in (\ref{eq:NLCE_full}).
%\begin{equation}
%\bm{1}=\bm{f}(\bm{1};\bm{d}_{min};\bm{p^{*}})
%\label{eq:NLCE_full}
%\end{equation}

\begin{theorem}
Suppose for any cell $i$, $p_i(\bm{1})\leq l_i$, then $\bm{\nu}=\bm{1}$ and $\bm{r}=\bm{d}_{min}$ are optimal for energy minimization.
%the serving demand for the optimal solution is minimum and the network-wide load $\bm{\nu}=\bm{1}$. In other words, constraint (\ref{eq:p0}d) in $P0$ can be replaced by equation (\ref{eq:NLCE_full}).
\label{thm:less_full}
\end{theorem}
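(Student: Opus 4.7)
The plan is to reduce the claim to the unconstrained optimality result of \cite{Ho:2014icc,Ho:2014sub} and then to verify that the candidate point $(\bm{\nu}=\bm{1},\bm{r}=\bm{d}_{min})$ is feasible in $P0$ under the stated hypothesis. I would first fix the range assignment $\bm{x}$, which freezes the association $\bm{u}(\bm{x})$ and the subsets $\mathcal{J}_i(\bm{x})$, so that only $(\bm{\nu},\bm{p},\bm{r})$ remain as decision variables. Consider the relaxation $\tilde{P0}$ obtained by dropping the per-unit power cap (\ref{eq:p0}b); the prior work \cite{Ho:2014icc,Ho:2014sub} establishes that in $\tilde{P0}$ the energy $\sum_i \nu_i p_i$ is minimized at $\bm{\nu}=\bm{1}$ together with $\bm{r}=\bm{d}_{min}$.

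Morally, this prior result is an extension of Lemma~\ref{lem:opt_full2} from pure power monotonicity ($p_i$ decreases as $\nu_i$ grows, componentwise) to a product monotonicity: the quantity $\nu_i p_i(\bm{\nu})$ is itself non-increasing in $\bm{\nu}$, an observation already suggested by the one-step inequality $p_i^1 \nu_i' > p_i^0 \nu_i$ cited from Lemma~7 of \cite{Ho:2014icc} inside the proof of Lemma~\ref{lem:opt_full2}. The demand side is analogous: raising any entry of $\bm{r}$ above its $\bm{d}_{min}$ lower bound while holding $\bm{\nu}$ fixed forces $\bm{p}$ upward at the NLCE fixed point, by a propagation argument identical in spirit to the one used in Lemma~\ref{lem:opt_full2}. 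Together these two monotonicities pin the unconstrained optimum at $(\bm{\nu}=\bm{1},\bm{r}=\bm{d}_{min})$. I expect the main obstacle to be recapitulating the product monotonicity at the equilibrium rather than along a single iteration, since the per-step bound in Lemma~7 of \cite{Ho:2014icc} has to be propagated through the entire fixed-point iteration across all interfering cells; fortunately, this step is done once and for all in \cite{Ho:2014icc,Ho:2014sub} and can simply be invoked here.

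The last step closes the argument in the constrained problem. The hypothesis $p_i(\bm{1})\le l_i$ for every $i$ is precisely the statement that the candidate $(\bm{\nu}=\bm{1},\bm{r}=\bm{d}_{min})$, together with its induced power $\bm{p}(\bm{1})$, satisfies the cap (\ref{eq:p0}b), so the candidate lies in the feasible region of $P0$. Since the $P0$ feasible region is contained in that of $\tilde{P0}$, the optimum value of $P0$ is no smaller than that of $\tilde{P0}$, while the candidate attains the $\tilde{P0}$ value; hence $(\bm{\nu}=\bm{1},\bm{r}=\bm{d}_{min})$ is optimal for $P0$ as well, which proves the theorem.
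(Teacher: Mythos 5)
Your proposal is correct and takes essentially the same route as the paper's proof: fix the association, invoke the prior result of \cite{Ho:2014icc,Ho:2014sub} that the power-unconstrained problem is optimized at $\bm{\nu}=\bm{1}$ and $\bm{r}=\bm{d}_{min}$, and use the hypothesis $p_i(\bm{1})\leq l_i$ to conclude that this point is feasible for, and hence optimal in, $P0$. The only (cosmetic) difference is that the paper additionally calls on Lemma~\ref{lem:opt_full2} to note that if $p_i(\bm{1})>l_i$ for some $i$ then no feasible solution exists at all, a case your relaxation-containment argument does not need because it is excluded by the theorem's hypothesis.
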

\begin{proof}
%The first observation that serving the minimum demand is optimal, is intuitively reasonable as less resources are used which saves energy \cite{Ho:2014icc}. 
The optimality of the minimum demand $\bm{d}_{min}$ is proved in \cite{Ho:2014icc}.
Consider a problem variant $\overline{P}0$ of Problem $P0$, where the power constraint (\ref{eq:p0}b) is relaxed to $p_i>0, ~\forall i\in\mathcal{I}$. For $\overline{P}0$, the full load optimality is proved by Theorem 1 in \cite{Ho:2014icc} for any given user association. 
%(Though the proof of full load optimality in \cite{Ho:2014icc} is only for single offset case, the optimality also applies to $P0$, since the optimal solution for $\overline{P}0$ must exist in one of these offsets in the full load.) 
%In \cite{Ho:2014icc}, the offset solution is initially given, e.g., $\bm{x}=\bm{0}$. The optimality of full-load also applies to $P0$, since the optimal solution for $\overline{P}0$ must exist in one of these offsets in the full load.
%Note that a solution $\bm{p^{*}}$ for $P0$ must be the solution for $\overline{P}0$ if the power constraint is satisfied. 
For the user association under the optimal offset  in $P0$, if the power solution $\bm{p{(\bm{1})}}$ for $P0$ is less or equal than power limit $\bm{l}=[l_1, l_2, \dots, l_n]$, then $\bm{p{(\bm{1})}}$ is optimal for $P0$. 
Otherwise, if there exists at least one cell $i$ that $p_i(\bm{1})>l_i$, 
then by Lemma \ref{lem:opt_full2}, reducing any cell's load cannot decrease $p_i(\bm{1})$. 
In this case, there is no solution for $P0$. Hence, the full load $\bm{\nu}=\bm{1}$ is optimal for $P0$. In other words, constraint (\ref{eq:p0}d) in $P0$ can be replaced by Equation (\ref{eq:NLCE_full}).
\vskip -7pt
\begin{equation}
\bm{1}=\bm{f}(\bm{1};\bm{d}_{min}, \bm{p{(\bm{1})}})
\label{eq:NLCE_full}
\end{equation}
\end{proof}
\vskip -30pt
\begin{theorem}
$P0$ is $\mathcal{NP}$-hard.
\label{thm:np-hard}
\end{theorem}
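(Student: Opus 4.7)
The plan is to establish hardness by a polynomial-time reduction from a classical $\mathcal{NP}$-hard problem. Since the only genuinely combinatorial layer of $P0$ is the selection of an offset from $\mathcal{S}$ for each LPN---which in turn deterministically induces a UE-to-cell partition through the signal-plus-offset rule---natural source problems are those over discrete choices on items whose grouping determines a cost, such as \textsc{Partition}, \textsc{3-Satisfiability}, or \textsc{Minimum Bisection}. My first step is to invoke Theorem~\ref{thm:less_full}: whenever $P0$ is feasible, full load is optimal, so the objective collapses to $\bm{1}^{\mathsf{T}}\bm{p}(\bm{1})$ evaluated at the association induced by $\bm{x}$. It therefore suffices to prove that choosing $\bm{x}$ to minimize this quantity under the NLCE at $\bm{\nu}=\bm{1}$ is already $\mathcal{NP}$-hard.

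The second step is the gadget construction. I would restrict $\mathcal{S}$ to two candidate values so that each LPN provides a single binary decision, and then design the positions of the LPNs and MCs together with the channel gains $g_{ij}$ so that each UE's attachment cell is determined by the offsets at only one or two LPNs. This yields an explicit and configurable mapping from the offset vector $\bm{x}$ to the induced partition $\{\mathcal{J}_i(\bm{x})\}_{i\in\mathcal{I}}$. The components of $\bm{d}_{min}$ would then be calibrated so that the NLCE fixed point, and hence the sum energy, encodes the source problem's cost function (for example, counting satisfied clauses, or balancing subset sums). The reduction would conclude by showing that the source instance is a YES-instance if and only if some $\bm{x}$ achieves $\bm{1}^{\mathsf{T}}\bm{p}(\bm{1}) \leq T$ for an explicit threshold $T$ computed from the construction.

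The main obstacle is the nonlinearity and global coupling of $\bm{p}(\bm{1})$ as a function of the association: moving a single UE perturbs the load at its new cell, which changes the interference seen by every other cell, and the fixed-point power does not decompose into per-UE contributions. To tame this I plan to place cells (or clusters of cells) far enough apart that the cross-gains $g_{kj}$ between distinct clusters are negligibly small, effectively decoupling the NLCE into independent per-cluster equations whose solutions admit closed forms via $\tilde{r}_{ij}=B\log(1+p_ig_{ij}/\sigma^2)$. Within each cluster the construction then has to force a discrete algebraic identity between the source problem's constraints and the energies attainable by different offset choices. Should decoupling prove too lossy, a fallback is to engineer a symmetric interference pattern in which only the partition sizes, not identities, affect the energy, enabling a \textsc{Partition}-style argument. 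Either way, the rigorous crux is the quantitative calibration of demands, gains, and the threshold $T$ so that feasibility of $P0$ with objective at most $T$ is equivalent to a YES-answer for the source problem; establishing this equivalence despite the fixed-point definition of $\bm{p}(\bm{1})$ is the step I expect to require the most care.
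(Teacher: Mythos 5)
There is a genuine gap: what you have written is a reduction \emph{plan}, not a reduction. The entire content of an $\mathcal{NP}$-hardness proof lies in the concrete gadget and the two-directional equivalence, and you explicitly defer exactly that (``the quantitative calibration of demands, gains, and the threshold $T$ \ldots is the step I expect to require the most care''). No source problem is fixed, no instance mapping is specified, and no argument is given that the fixed-point energies actually separate YES- from NO-instances. Moreover, your main technical device --- spatially decoupling clusters so that the NLCE splits into independent closed-form equations --- works against you: a fully decoupled instance is separable, so the optimal offset can be chosen greedily per cluster in polynomial time, and any hardness must come precisely from inter-gadget interference, which your construction is designed to suppress. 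The \textsc{Partition}-style fallback (``only partition sizes matter'') is also shaky, since \textsc{Partition} is about weight sums and is only weakly $\mathcal{NP}$-hard, so a size-only symmetric construction does not obviously encode it.

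For comparison, the paper's proof reduces from Maximum Independent Set and resolves the fixed-point difficulty you flag by a different mechanism rather than by decoupling. For each graph node it creates one MC, one LPN and one UE, with gains $1/n^2$ (MC--own UE), $1$ (LPN--own UE), and a small $\epsilon$ from an LPN to the UEs of neighboring nodes; offsets are restricted to $\{-\infty,0\}$, demands and noise are $1$, and the LPN power cap is $1$. At full load the coupling equation forces an explicit power ($\mathsf{SINR}=1$ per served UE), so the per-resource-unit power limit on LPNs makes two neighboring LPNs simultaneously active \emph{infeasible} --- the graph edges are encoded as hard conflicts through interference plus the power cap, not through a calibrated cost threshold. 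A small-$\epsilon$ bounding argument ($\overline{P}_{k+1}<P_k$) then shows that activating more LPNs always lowers the total energy, so the optimum of $P0$ on this instance yields a maximum independent set. If you want to salvage your approach, you would need to commit to one source problem and supply a gadget in which the interference coupling itself (as in the paper) enforces the combinatorial constraints, together with explicit fixed-point power values proving the threshold equivalence.
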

\begin{proof}
The proof is under the assumption of full load for all MCs and LPNs, i.e., $\bm{\nu}=\bm{1}$. The basic idea is to reduce the Maximum Independent Set (MIS) problem to Problem $P0$. We construct a specific HetNet scenario. 
For each served UE, there is one potential MC and one potential LPN. 
Correspondingly, we consider a graph with $n$ nodes $(n\geq 2)$.
Then for each node $i$ in the graph, we set three corresponding elements MC $i$, LPN $i$ and UE $i$ in the specific HetNet scenario.
In total we have $n$ MCs, $n$ LPNs and $n$ UEs. 
For any $i$, we set the gain between MC $i$ and UE $i$ to be $\frac{1}{n^2}$, and the gain between LPN $i$ and UE $i$ to be $1$. In addition, we set the gain between LPN $i$ and other UEs that are neighbors to node $i$ in the graph to be a small positive real number $\epsilon$. Gain values other than the above two cases are negligible, treated as $0$. The power limits for LPNs and MCs are set to $1$ and $\infty$, respectively. The noise $\sigma^2$ and any user demand $d_{ij}$ are both set to $1$. The set of offset levels for LPNs is $\{-\infty,0\}$. If the offset for LPN $i$ is $-\infty$, the LPN $i$ serves no UE. Otherwise the offset of LPN $i$ is $0$, and thereby LPN $i$ is activated to serve UE $i$. From the reduction, we show the following two points: 1) The association between LPNs and UEs in the specific HetNet is corresponding to the feasible MIS solution. 2) It's always better to use LPN in the specific HetNet. 

The first point can be seen from the fact that it would never happen for any $i$, that both UE $i$ and its neighboring UE are simultaneously served by their respective LPNs, due to the interference generated from $i$'s neighbor LPNs. For the second point, we prove that for a sufficiently small $\epsilon$, activating more LPNs is always beneficial to reducing total power.
Suppose there are $k~(k<n)$ LPNs using offset $0$, then the total power of the LPNs is $k$ and there are $n-k$ MCs serving the remaining $n-k$ UEs. Consider the best-possible case, i.e., no interference but only noise $\sigma^2=1$. For power consumption of any of the $n-k$ MCs, the load coupling equation is $1.0=\sfrac{1}{\log_2(1+p\times\frac{1}{n^2}/1.0)}$, and we get $p=n^2$. With $n-k$ MCs, we have the total power for all MCs as $(n-k)n^2$. Therefore, if $k$ LPNs use offset $0$, for the best-possible power in total, denoted by $P_{k}$, we have 
\begin{equation}
P_{k}\geq k+(n-k)n^2
\label{eq:best_possible}
\end{equation}
Suppose there is a solution with $k+1$ LPNs using offset $0$, then the total power of the LPNs is $k+1$. For the $n-k-1$ MCs, consider the worst-possible case, i.e., for each of the corresponding UEs, there is interference from LPNs (at the neighboring nodes). Clearly, there cannot be more than $k+1$ interfering LPNs, each generating interference $\epsilon$. Then, for each of the $n-k-1$ MCs to serve the corresponding UE, we have the load couple equation (\ref{eq:load}) to be $1.0=\sfrac{1}{\log_2(1+\frac{p\times\sfrac{1}{n^2}}{(k+1)\epsilon+1.0})}$ and we have $p=n^2(k+1)\epsilon+n^2$. With all $n-k-1$ MCs, we have the total power as $(n-k-1)n^2(k+1)\epsilon+(n-k-1)n^2$. Therefore, if $k-1$ LPNs use offset $0$, for the worst-possible power in total, denoted by $\overline{P}_{k+1}$, we have 
\begin{equation}
\overline{P}_{k+1}\leq k+(n-k)n^2+[1-n^2+\epsilon(n-k-1)n^2(k+1)]
\label{eq:worst_possible}
\end{equation}
Combining the two equation (\ref{eq:best_possible}) and (\ref{eq:worst_possible}), for a sufficiently small $\epsilon$, we have 
\[\overline{P}_{k+1}<P_{k}
\]
which means that activating more LPNs is always beneficial to reduce total power. 

Consequently, solving the specified range optimization scenario of Problem $P0$ to optimum gives the maximum independent set of the graph.
%See Appendix \ref{app:thm2}.
\end{proof}

\section{algorithm for computing offset solution}
\label{sec:algorithm}
%Theorem $\ref{thm:np-hard}$ shows that it's hard to find a polynomial algorithm for solving $P0$, which drives us to resort to some intelligent search scheme. In this section, we leverage the tabu search method.

%Followed by Theorem \ref{thm:less_full} that the optimal solution exists in full load case, for each offset setting,  we need to adjust the power to the constraint (\ref{eq:NLCE_full}), which can be iteratively achieved by bi-section search scheme proposed in \cite{Ho:2014icc}. 
%This means that, the obtained power $\bm{p}$ always makes the full load in our algorithm, i.e., $\forall i, \nu_i=1$ in each cell. 
We propose an algorithm based on Tabu Search for Offset optimization (TSO), to minimize energy at full load.
The objective (\ref{eq:p0}a) of $P0$ is $\min~\sum_{i=1}^{n}p_i$ at full load. The objective function \cite{Lundgren:2010} is $g(\bm{x})\triangleq\sum_{i=1}^{n}p_i$ in TSO, where the vector $\bm{x}$ represents the offset levels, as same as that we mentioned earlier in Section \ref{sec:sys_mod}.
% Since we want to view the search algorithm in a general perspective, we avoid to care about the specific offset value for each cell $i\in\mathcal{I}_2$. 
We let $x_i\in\{1,2,\ldots,|\mathcal{S}|\}$ so that $x_i=s$ means the LPN cell $i$ is assigned with the $s$th level offset, which is corresponding to $x_{is}=1$ in Problem $P0$. There are totally $m=|\mathcal{I}_2|$ elements in the offset solution vector $\bm{x}$ for LPNs. %The goal of the search algorithm is trying to find a feasible solution $\bm{x^{*}}$ close to the optimal one in the solution space $P$. 
The neighborhood of $\bm{x}$ is defined as $N(\bm{x})=\left\{\bm{x'}\middle|\sum_{i=1}^{m}\left| x_i- x'_j\right|\leq 1\right\}$. Suppose the current best solution is $\bm{x^{*}}$, if $g(\bm{x^{*}})$ is not improved after $\alpha$ steps, then the termination rule will be triggered and the algorithm will stop. The tabu attribute is defined as a modification on the position of a solution vector. The tabu length is denoted by $\beta$. We set the aspiration rule as follows. Suppose $\bm{x^{*}}$ is the current best solution, $\forall$ $\bm{x}$, if $g(\bm{x})<g(\bm{x^{*}})$, 
then the solution $\bm{x}$ can be chosen even if the modification position in the variation $\bm{x^*}\rightarrow\bm{x}$ is given the tabu status in the table $T$.

\begin{algorithm}
\renewcommand{\algorithmicrequire}{\textbf{Given: }}
\renewcommand{\algorithmicensure}{\textbf{Output: }}
\begin{algorithmic} %这个1 表示每一行都显示数字
\caption{Tabu Search for Offset Optimization (TSO).}
\label{alg:tabu search}
\REQUIRE $\bm{d}_{min}$, $[l_1,l_2,\ldots,l_n]$, $\bm{\nu^*}=\bm{1}$\\
\ENSURE $\bm{x^{*}}$ (the optimal offset vector)\\
\begin{codebox}
\li $\bm{x^{*}}\gets \bm{x'}$
\li $\bm{p^{*}}\gets$ $\min\{\bm{p}:\bm{1}=\bm{f^{\bm{u}(\bm{x^{*}})}}(\bm{1};\bm{d}_{min},\bm{p})\}$; $k \gets 0$
\li $T$=
  \begin{tabular}{|c|c|c|c|}
  \hline
   $1$ & $2$ & $\cdots$ & $m$\\
   \hline
   $0$ & $0$ & $\cdots$ & $0$ \\
    \hline
  \end{tabular} 
\li \While $k\leq\beta$ \Do 
\li     $C\gets\phi$
\li     \For $\forall\bm{x}\in N(\bm{x'})$ \Do
\li         \If $T[i_{\bm{x'}\rightarrow\bm{x}}]=0$ \Then
\li             $C\gets C\cup\{\bm{x}\}$
            \End
        \End
\li     $\bm{x''}\gets\mathop{\arg\min}\limits_{\bm{x}\in C} g(\bm{x})$
\li     $T[i_{\bm{x'}\rightarrow\bm{x''}}]\gets\beta$
\li     $\bm{p''}\gets\min\{\bm{p}:\bm{1}=\bm{f^{\bm{u}(\bm{x''})}}(\bm{1};\bm{d}_{min},\bm{p})\}$
\li     \If $g(\bm{x''})<g(\bm{x^{*}})~\textbf{and}~\forall i~p''_i\leq l_i$  \Then
\li         $\bm{x^{*}}\gets\bm{x''}$; $\bm{p^{*}}\gets\bm{p''}$; $k\gets 0$
\li     \Else
\li         $k\gets k+1$
        \End
\li     \For $i\gets 1 \To m$ \Do
\li         \If $T[i]>0$ \Then
\li             $T[i]=T[i]-1$
            \End
        \End
\li     $\bm{x'}\gets\bm{x''}$
    \End
\li \Return $\bm{x^*}$            
\end{codebox}
\end{algorithmic} %这个1 表示每一行都显示数字
\end{algorithm}
TSO is given in Algorithm 1. We compute the optimal power vector $\bm{p}(\bm{1})$ for each offset, by using the iterative bi-section algorithm in \cite{Ho:2014icc}. 
%We have to make sure that $\bm{u}(\bm{x}^{*})$ achieves the minimal sum transmission energy (\ref{eq:p0}a) under the five constraints (\ref{eq:p0}b-\ref{eq:p0}g). The QoS constraint (\ref{eq:p0}d) is non-trivial. Having proven that full load is optimal for Problem $P0$ in Section~\ref{sec:full_load}, we only need to compute an optimal power vector $\bm{p^{*}}$ for which $\mathsf{NLCE}$ results in full load. Thus constraint (\ref{eq:p0}d) can be solved by the iterative bi-section algorithm in \cite{Ho:2014icc}, where we adjust the power so as to make the equation (\ref{eq:NLCE_full}) hold. Under this framework, 
For any two neighbored solution vectors $\bm{x'}$ and $\bm{x}$, we denote the modification position between the two vectors as $i_{\bm{x'}\rightarrow\bm{x}}$. In other words, suppose $\bm{x'}$ and $\bm{x}$ differ in the $j$th position, then $i_{\bm{x'}\rightarrow\bm{x}}=j$. %The tabu search algorithm performs the exploration in the whole solution space $P$. 
%It stimulates the empiricism and memory ability by resorting to the three essential concepts, termination rule, tabu table and aspiration criterion. 
%Termination rule determines the terminate condition of the algorithm, which is intuitively corresponding to the empiricism. The tabu table help the algorithm avoid re-modifying some vector position having been modified not long ago so as to avert the loop search, which is corresponding to the memory characteristics. 
%When the current vector $\bm{x'}$ moves to any other $\bm{x}$, the $i|\bm{x'}\rightarrow\bm{x}$ position in tabu table $T$ will be set to the tabu length $\beta$. The aspiration criterion aspires some tabued elment that leads to a better solution than the current best one. This can be viewed as another experiential behavior of the algorithm. 
%Under the three essential concepts---termination rule, tabu table and aspiration criterion---the tabu search procedure is direct. 
In Line 1, a feasible solution $\bm{x'}$ is randomly chosen at the beginning and is assigned to be the current best solution $\bm{x^*}$. In Line 2, the power $\bm{p^{*}}$ at full load is computed. Also, $k$ is the count parameter that records the number of steps within which the current best solution $\bm{x^{*}}$ has not been improved. Line 3 initializes the tabu table $T$ by setting all the positions to be zero. Line 4--19 are the main loop that ends when the termination rule is triggered, i.e., $k>\beta$. The eligible neighbors set $C$ of the current solution vector $\bm{x}$ is calculated in Line 5--8. Then we choose the solution $\bm{x''}\in C$ that achieves the minimal value of $g$, as shown in Line 9. Regarding for the variation $\bm{x'}\rightarrow\bm{x''}$, the modification position $i_{\bm{x'}\rightarrow\bm{x''}}$ is given the tabu status in $T$, shown in Line 10. We thereby get the corresponding power vector $\bm{p''}$ for $\bm{x''}$ in Line 11. Line 12--15 state the update rule for the best solution. If the new solution $\bm{x''}$ is better and satisfies the power constraint (\ref{eq:p0}b), both $\bm{x^{*}}$ and $\bm{p^{*}}$ will be updated and the count variable $k$ will be set back zero, otherwise we increase $k$ by one. In Line 16--18, $T$ is refreshed, by decreasing all non-zero records by one. In Line 19, the current solution vector $\bm{x'}$ is updated by $\bm{x''}$, for the next iteration. When the termination rule is triggered,
the best solution $\bm{x^{*}}$ is returned, in line 20.

\section{numerical results}
\label{sec:numerical}

In this section, we numerically investigate the theoretical findings and evaluate performance in LTE HetNets. 
%For any cell offset in $\mathcal{S}$, we assume that UE is able to synchronize to the serving cell, decode control channels and perform measurement reporting.
\subsection{Network Configuration}
\begin{figure}[!tbp]
  \centering
  \includegraphics[width=0.55\linewidth]{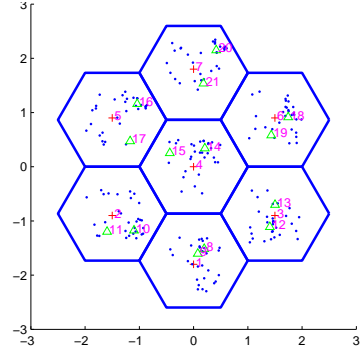}
    \caption{Network layout}
  \label{fig:network}
  \vskip -15pt
\end{figure}
The network layout is illustrated in \figurename~\!\ref{fig:network}, where the numbers indicate the cell IDs. There are 7 hexagonal serving cells, where the center is deployed with an MC (cross marker). For each MC, we randomly place two LPN cells (green triangles) in the corresponding hexagon. There are 21 cells (7 MCs and 14 LPN cells) in total. We generate 30 users (blue dots) for each MC (and also for its two LPN cells). All the users are distributed randomly and uniformly in each area. 
%The HetNet follows the LTE standard of 180 KHz bandwidth. 
The HetNet operates at 2 GHz. Each resource unit follows the LTE standard of 180 KHz bandwidth and the bandwidth for each cell is 4.5 MHz.
The power limit per resource unit for MCs and LPNs are set to 200 mW and 50 mW, respectively. The noise power spectral density is set to -174 dBm/Hz. The channel gain $g_{ij}$ consists of path loss and shadowing fading, where the path loss follows the widely used COST-231-HATA model and the shadowing coefficients are generated by the log-normal distribution with 8 dB standard deviation. There are 11 offset levels for the LPN cells, as 0 dB, 1 dB, 2 dB,$\ldots$,10 dB. For TSO, we set $\alpha=10\times m=140$ and $\beta=\lceil\sqrt{m}\rceil=\lceil\sqrt{14}\rceil=4$. Note that the offset value only affects the UE-cell association and has no effect on the real transmit power or interference.

%So if the current best solution has not been improved after exploring $140$ solutions, then the termination rule will be triggered. 
%In our real implementation, we utilize the memo technique, so the sum transmission energy for each feasible offset vector $\bm{x}$ is only computed once. Actually, we even do not need to recompute the power for different offsets $\bm{x}$ and $\bm{x'}$, if they both leads to the same UE--cell association, i.e., $\bm{u}(\bm{x})=\bm{u'}(\bm{x'})$.

\subsection{Results of Offset Optimization}

The sum energy for different cases appears to grow exponentially fast as demand increases in \figurename~\ref{fig:plot1}. Beyond some demand value, the user demand may not be satisfiable. One can observe in \figurename~\ref{fig:plot2} that the power on each cell is far less than the power limit in the feasible solution. However, if we increase the user demand, then the power of a few cells may dramatically increase such that it will exceed the maximum constraint value. In this case, there is no solution to meet the demand target. Thus the power in \figurename~\ref{fig:plot1} is much lower than the maximum power constraint.

\begin{figure}[!tbp]
  \centering
  \includegraphics[width=0.82\linewidth]{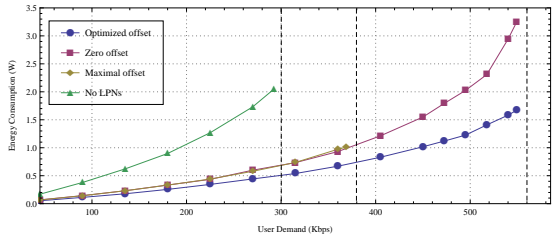}
    \caption{Energy consumption with respect to user demand}
  \label{fig:plot1}
\end{figure}

\begin{figure}[!tbp]
  \centering
  \includegraphics[width=0.82\linewidth]{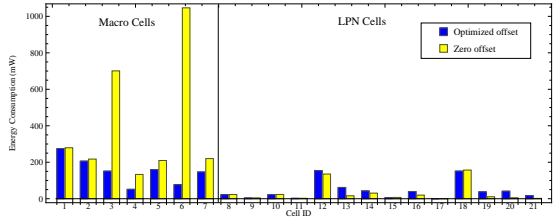}
    \caption{Energy consumption for optimized offset and zero offset in each cell.}
  \label{fig:plot2}
  \vskip -10pt
\end{figure}
%The objective here is to optimize the sum transmission energy via range range assignment for LPN cells. As proved in Section \ref{sec:full_load} that the optimal sum transmission energy is achieved in the full load case, we only consider $\bm{\nu}=\bm{1}$ in the numerical experiment. 
In \figurename~\ref{fig:plot1}, we compare the Optimized Offset (OO) with Zero Offset (ZO) and Maximal Offset (MO, set to 10 dB). We also show the scenario of No LPN (NL) as the baseline. In the experiments, OO achieves the lowest sum transmission energy among all the three cases.
On average, the energy consumption for OO is about 31.9\% lower than that for ZO. When the demand is 550 Kbps, the energy consumption for OO is about 48.5\% lower than that for ZO. The worst performance with LPNs deployed is given by MO, since the transmit power for LPN cells are `virtually' amplified too large so that only a few UEs are assigned to the MCs. %So the actual `serving ability' for the LPNs to UEs will by no means be improved by imposing the offsets. 

\figurename~\ref{fig:plot2} shows the transmission energy in each cell for ZO and OO. The user demand is set to 550 Kbps. For ZO, most users are served by MCs, so the majority of energy consumption is in MC 1--7. Specifically, for ZO, the energy consumption is very high in MC 3 and 6. In comparison, energy consumption in both MC 3 and 6 are significantly reduced in OO. This is because some UEs served by MC 3 and 6 are assigned to some LPNs in OO. 
%One can observe that there is not much power increase in LPN cells in OO than in ZO. 
One can observe that the energy consumption in OO stays at the almost same level with ZO in LPN cells.
%In other words, the optimized offset provides a considerably power saving in MCs, while introducing not much more energy consumption in LPN cells. 
%We remark that a suitable cell to a UE means that it is able to meet the UE's demand in a comparably moderate power consumption. 
By using the optimized offset to LPNs to adjust the cell-UE association, the UEs are assigned to more suitable cells. 
%Thus the high energy consumption for MCs to serve the UEs shifts to LPN cells, which, however, in a comparably lower level than in MCs.
Thus the overall performance is significantly improved.

\section{conclusion}
\label{sec:conclusion}

We proposed a load coupling optimization framework for sum transmission energy in LTE HetNets via LPN range adjustment using cell-specific offsets. Under the maximum power constraint, we proved that full load is optimal for energy minimization. We further provided the insight that the optimization problem is $\mathcal{NP}$-hard. With the above theoretical properties, we proposed an algorithm TSO for optimizing the range assignment. For a scenario of HetNet deployment, LPN range optimization achieved by TSO leads to better sum transmission energy. 
%Specifically, the high power cost for MCs to serve the UEs shifts to LPN cells, which, however, in a comparably lower level than in MCs. 
%As future work, 
%%one line of further investigations is more extensive simulation experiments on the HetNet scenarios with larger scales. 
%one interesting topic is to investigate the performance distinction via range assignment in the HetNet with higher UEs density. In addition, it is also an significant topic to investigate the effect of LPN range assignment on the maximum supported user demand.

%% use section* for acknowledgement
%\section*{Acknowledgment}
%The work of Di Yuan is supported by the 

% trigger a \newpage just before the given reference
% number - used to balance the columns on the last page
% adjust value as needed - may need to be readjusted if
% the document is modified later
%\IEEEtriggeratref{8}
% The "triggered" command can be changed if desired:
%\IEEEtriggercmd{\enlargethispage{-5in}}

% references section

% can use a bibliography generated by BibTeX as a .bbl file
% BibTeX documentation can be easily obtained at:
% http://www.ctan.org/tex-archive/biblio/bibtex/contrib/doc/
% The IEEEtran BibTeX style support page is at:
% http://www.michaelshell.org/tex/ieeetran/bibtex/
%\bibliographystyle{IEEEtran}
%% argument is your BibTeX string definitions and bibliography database(s)
%\bibliography{/Users/lei/Dropbox/papers3}
%
% <OR> manually copy in the resultant .bbl file
% set second argument of \begin to the number of references
% (used to reserve space for the reference number labels box)

\section{Acknowledgements}

This work has been supported by the EC Marie Curie
project MESH-WISE (FP7-PEOPLE-2012-IAPP: 324515) and the Link{\"o}ping-Lund Excellence Center in Information Technology (ELLIIT), Sweden.
The work of the second author has been supported by the China Scholarship Council (CSC).

\end{document}